




\documentclass[pdflatex,sn-mathphys]{sn-jnl}

\usepackage{array}
\usepackage{multirow}
\usepackage{breakurl}
\usepackage{amsmath}
\usepackage{amsfonts}
\usepackage{amssymb}
\usepackage{amsthm}
\usepackage{graphicx}


\jyear{2022}%

\theoremstyle{thmstyleone}%
\newtheorem{theorem}{Theorem}

\newtheorem{corollary}{Corollary}
\newtheorem{conjecture}{Conjecture}

\newtheorem{problem}{Problem}

%

\theoremstyle{thmstyletwo}%
\theoremstyle{thmstylethree}%
\newtheorem{definition}{Definition}%

\raggedbottom

\begin{document}

\title[A Relative Church-Turing-Deutsch Thesis]{A Relative Church-Turing-Deutsch Thesis from Special Relativity and Undecidability}


\author*[1,2]{\fnm{Blake} \sur{Wilson}}\email{wilso692@purdue.edu}

\author[3]{\fnm{Ethan} \sur{Dickey}}\email{dickeye@purdue.edu}

\author[1,3]{\fnm{Vaishnavi} \sur{Iyer}}\email{iyer94@purdue.edu}



\author*[2,3,4]{\fnm{Sabre} \sur{Kais}}\email{kais@purdue.edu}

\affil[1]{\orgdiv{Elmore Family School of Electrical and Computer Engineering}, \orgname{Purdue University}, \orgaddress{\street{501 Northwestern Ave.}, \city{West Lafayette}, \postcode{47907}, \state{IN}, \country{USA}}}

\affil[2]{\orgdiv{Quantum Science Center}, \orgname{Oak Ridge National Laboratory}, \orgaddress{\street{1 Bethel Valley Road}, \city{Oak Ridge}, \postcode{37830}, \state{TN}, \country{USA}}}

\affil[3]{\orgdiv{Department of Computer Science}, \orgname{Purdue University}, \orgaddress{\street{305 N University St.}, \city{West Lafayette}, \postcode{47907}, \state{IN}, \country{USA}}}

\affil[4]{\orgdiv{Department of Chemistry}, \orgname{Purdue University}, \orgaddress{\street{560 Oval Drive}, \city{West Lafayette}, \postcode{47907}, \state{IN}, \country{USA}}}


\abstract{Beginning with Turing's seminal work in 1950, artificial intelligence proposes that consciousness can be simulated by a Turing machine. This implies a potential theory of everything where the universe is a simulation on a computer, which begs the question of whether we can prove we exist in a simulation. In this work, we construct a relative model of computation where a computable \textit{local} machine is simulated by a \textit{global}, classical Turing machine. We show that the problem of the local machine computing \textbf{simulation properties} of its global simulator is undecidable in the same sense as the Halting problem. Then, we show that computing the time, space, or error accumulated by the global simulator are simulation properties and therefore are undecidable. These simulation properties give rise to special relativistic effects in the relative model which we use to construct a relative Church-Turing-Deutsch thesis where a global, classical Turing machine computes quantum mechanics for a local machine with the same constant-time local computational complexity as experienced in our universe.}

\keywords{quantum computing, undecidability, automata, Church-Turing thesis}



\maketitle

\section{Introduction}\label{sec1}


Artificial intelligence (A.I.) is built on the assumption that mathematics can simulate brain behavior. Once we properly simulate a brain, many believe that the simulation will be conscious \cite{Turing1937OnEntscheidungsproblem, Arora2009ComputationalApproach}. Naturally, if consciousness is then computable, who is to say that our universe and all of our consciousness isn't \textbf{already} a simulation? This leaves open the \textbf{simulation hypothesis} that our universe is already a simulation on a computer and consciousness is a byproduct of that simulation \cite{Turing1950ComputingTuring, Bostrom2003AreSimulation, Deutsch1985QUANTUMCOMPUTER.}. In this paper, we study two questions at the heart of the simulation hypothesis, i) can a simulation decide if it is a simulation and ii) is quantum mechanics compatible with a classical simulation hypothesis? We answer both questions using a relative model of computation where a classical, global Turing machine simulates an arbitrary, local machine (e.g., artificial intelligence or the physical laws of the universe) encoded in its memory with any computational ability it desires, so long as the local machine is computable by a Turing machine. We show that special relativistic effects like time dilation and length contraction \cite{Griffiths2005Electrodynamics} exist in the relative model and we prove that the local machine cannot compute the global space or time resources nor simulation error incurred by the global machine. Then, we extend these undecidability results to show that it is possible to compute quantum mechanics for a local machine using a global, classical Turing machine with the same exponential quantum speed-up relative to the local machine's clock.

\subsection{Motivation and Related Works} 
The motivations of this work are both philosophical and practical. Philosophically, the simulation hypothesis has captivated pop-culture and science fiction for decades. However, the depictions in movies and books are not typically mathematically rigorous. Here, we provide a mathematically rigorous framework to build the simulation hypothesis using Turing machines\cite{Aaronson2013WhyComplexity}. These results apply to both quantum complexity theory and artificial intelligence. Specifically for quantum complexity theory, we construct a new oracle equivalence between quantum and classical computation using relativity, a topic that's been explored with closed timelike curves in general relativity \cite{Aaronson2009ClosedEquivalent,Deutsch1991QuantumLines}.
On the more practical side, the emergence of efficient semiconductor hardware for machine learning has ushered in a renaissance of A.I. New A.I. algorithms can construct fake, realistic human faces\cite{Karras2021ANetworks}, generate art from descriptions\cite{Ramesh2021Zero-ShotGeneration}, and even drive cars\cite{Bojarski2016EndCars}. Recent developments in brain-inspired computing may bring about the first hint of consciousness within A.I. \cite{Mehonic2022Brain-inspiredPlan, Bostrom2003AreSimulation, Rhodes2020Brain-inspiredCompleteness}. Once the A.I. reaches maturity, it may begin probing its own existence and try to determine whether it is just a machine, in the same way that we probe whether we are just an amalgamation of cells. Our work shows that without querying the answer from the machine simulating it, an A.I. cannot compute a set of properties we denote as \textit{simulation properties} that encompass several facts about how the artificial intelligence is being simulated. Applying these results to our own consciousness and universe, there are properties of the simulation hypothesis that we cannot compute. This leads to the following philosophical conjecture that proving we exist in a simulation is undecidable.

\begin{conjecture}
The problem of proving our universe is a simulation is undecidable.
\end{conjecture}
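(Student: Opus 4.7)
The plan is to reduce the problem of proving that our universe is a simulation to the problem of computing a simulation property, which the paper has already established to be undecidable in the Halting sense. The first step is to formalize \emph{proving our universe is a simulation} as a decision problem for the local machine: given only the data a local machine can observe from within its own computation, output \textbf{yes} if and only if a global classical Turing machine of the type described in the relative model is simulating that local machine. Fixing an encoding of local configurations, this becomes a language membership problem over the local machine's own Gödel-indexed states, and the conjecture is the claim that no total local decider for this language exists.

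The second step is to argue that any purported decider $D$ must be sensitive to at least one nontrivial property of the global simulator. The plan is to show that if $D$ outputs \textbf{yes} or \textbf{no} based solely on locally observable data, then the \textbf{yes} and \textbf{no} branches must be distinguishable by some global quantity that differs between a simulated and a non-simulated execution; otherwise $D$ is independent of the existence of a simulator and trivially incorrect on one side. I would then exhibit a reduction: given $D$, construct a local procedure that uses $D$ as a subroutine to extract a bit of information about the global runtime, global space, or accumulated simulation error. Composing this reduction with the paper's undecidability theorems for simulation properties yields the desired contradiction, so $D$ cannot be total.

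To handle indirect strategies that try to avoid explicitly querying global resources (for example, self-referential derivations that attempt to prove non-simulation from internal axioms alone), the third step is a diagonalization in the style of Turing's undecidability proof for the Halting problem, adapted to the relative model. Assuming a total decider $D$ exists, I would construct a local machine $M_D$ whose behavior is defined by running $D$ on its own description and then acting contrarily with respect to the global simulator's state; relating $M_D$'s global trace to $D$'s output forces $D$ into inconsistency, because the local and global clocks and memories are decoupled in the relative model and the diagonal behavior cannot be simultaneously consistent with both.

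The hard part will be step two, and in particular making precise the claim that detecting a simulator \emph{must} be witnessed by a simulation property rather than by some subtler internal invariant of the local machine. A careful formulation of what counts as an admissible local decision procedure in the relative model is needed, and the diagonalization in step three must be engineered to cover exactly the residual cases that the reduction in step two leaves open. Once those two pieces interlock cleanly, the conjecture follows immediately from the undecidability of simulation properties already proved earlier in the paper.
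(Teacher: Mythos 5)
The statement you are addressing is left as a \emph{conjecture} in the paper --- no proof is given --- so there is nothing to compare your attempt against; it must stand on its own. It does not yet stand, and the gap is concentrated exactly where you suspect. The paper's undecidability results (Theorem \ref{thm:SIMPROPERTY_undecidable} and its corollaries) apply only to \emph{simulation properties}, which are by definition non-trivial partial functions of the runtime tape set $\Delta_\tau = \{S_{k_\tau+1},\dots,S_{k_{\tau+1}-1}\}$, i.e., of the global machine's hidden intermediate states. Your step two asserts that any correct decider $D$ for ``am I simulated?'' must extract a bit of information about global runtime, space, or error --- but this is precisely the claim that needs proof, and it is not obviously true. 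A decider could in principle depend only on the local tape states $S'_{k_\tau}$, which the local machine \emph{can} read: for instance, on residual artifacts of the global machine's approximations that fall outside the protection of Theorem \ref{thm:undecidable_abs_error} (which only holds under the hypothesis $\Tilde{f}(x) \cup (\bigcup_{\Tilde{x}}f(\Tilde{x})) = \bigcup_{\Tilde{x}}f(\Tilde{x})$), or on structural regularities of the encoding of $M'$ itself. Reducing such a decider to a computation over $\Delta_\tau$ is not a formality; it is the entire content of the conjecture.

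There is a second, prior gap in step one: the relative model only ever constructs the ``yes'' instances. Every local machine in the paper's framework is simulated by definition, and the paper supplies no formal model of a \emph{non-simulated} local machine against which a decider's ``no'' answers could be checked for correctness. Without specifying both worlds, the language you propose to diagonalize over is not well-posed, and your step three inherits this problem: a machine $M_D$ that ``acts contrarily with respect to the global simulator's state'' cannot be constructed within the model, because the local machine's inability to read or condition on the global state is exactly what Theorem \ref{thm:SIMPROPERTY_undecidable} establishes. The diagonal machine would need access to the very information whose inaccessibility you are trying to exploit. These are not polish issues; they are the reasons the paper leaves the statement as a conjecture rather than a theorem.
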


\section{Results}\label{sec2}
First, we formalize the simulation of any A.I. using the relative model shown in Figure \ref{fig:relative_model}.
\subsection{Relative Model} \label{sec:relative_model}
\begin{figure}[h]
    \centering
    \includegraphics[width=\textwidth]{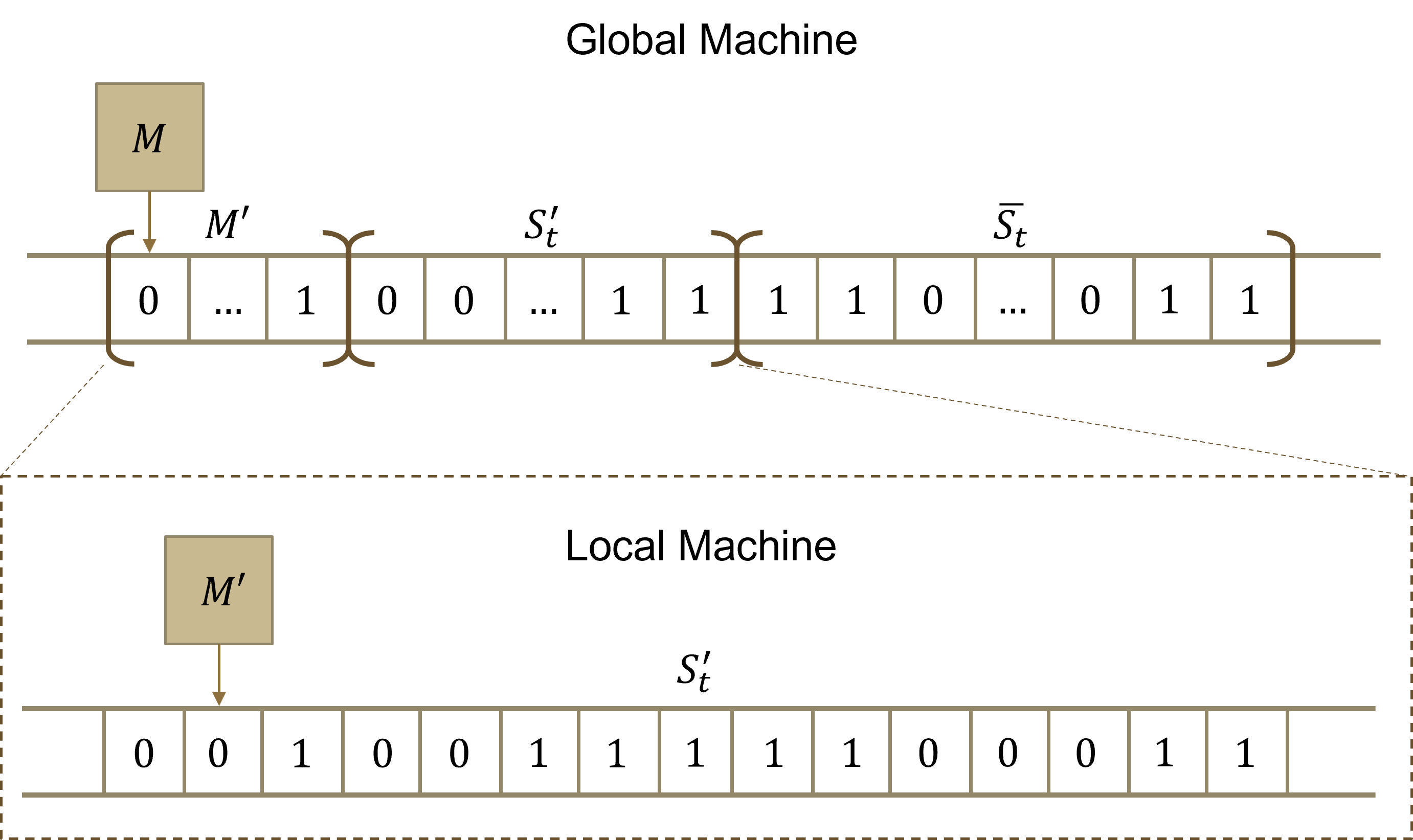}
    \caption{Relative model at some global time $t$. Here the global machine $M$ has its head position at the first bit that encodes the local machine $M'$. To the right of the encoding of $M'$ is a section of the tape $S'_t$ which acts as the local tape for $M'$ and is the only section of the tape that $M'$ can act on. Any computation that $M'$ would perform if it was real exists on $S'_t$. Finally, to the far right is the scrap paper portion of the tape $\bar{S}_t$ that $M$ uses to perform any necessary computations to simulate $M'$. Here, $\bar{S}_t$ can never be accessed by $M'$, i.e., $\bar{S}_t$ and $S'_t$ do not overlap}
    \label{fig:relative_model}
\end{figure}
The relative model of computation is another formalization of the universal Turing machine model used by Turing \cite{Turing1937OnEntscheidungsproblem}. We construct a global Turing machine $M = TM(\delta, Q, \Gamma)$ with a \textbf{global clock} $t \in \mathbb{Z}^+$ that operates according to its transition function $\delta$ on a string $S_t \in \Gamma^*$ encoded on its tape. Here, $Q$ is the set of states for $M$ and $\Gamma$ is a set of symbols that can be written on the tape, known as the global machine's alphabet. We refer to $S_t \in \Gamma^*$ as both the state of the tape at global time $t$ and the tape itself. Additionally, the global Turing machine $M$ simulates another machine $M'$ encoded in a portion of the tape $M' \in S_t$. The local machine has its own transition function $\delta'$, state set $Q'$, and alphabet $\Gamma'$ encoded in $M'$, though it is not necessarily isomorphic to a Turing machine. The tape of $M'$ exists as a subset of the overall tape $S'_{t} \subset S_{t}$. As the global Turing machine $M$ is performing the necessary computations to simulate $M'$, its transition function $\delta$ uses space on a separate portion of the tape $\bar{S}_t \subset S_t$ outside of $M'$ or $S'_t$. Once the global machine has performed all the necessary computations, it updates the local machine $M'$ and its tape $S'_t$ in accordance with its transition function $\delta'$ by copying over the new states. Once both are updated, we say the \textbf{local clock} $\tau$ of the local machine has moved forward one local time step. We define the set $K = \{k_1, k_2, ... \}$ to be all global time steps $k_\tau \in \mathbb{Z}^+$ at which the global machine $M$ has finished updating the tape state $S'_{t}$ and machine state $q'_t$ to match the output of the transition function $\delta'$ acting on the previously recorded global time step, i.e., $\delta'(S'_{k_\tau}, q'_{k_\tau}) \rightarrow \{S'_{k_{\tau + 1}}, q'_{k_{\tau+1}}\}$. We choose to increment $\tau$ only when $S'_{k_{\tau + 1}}$ and $q'_{k_{\tau+1}}$ are updated according to the transition function and not when $S'_{t}$ and $q'_{t}$ are changed by write operations for reasons outlined in Appendices \ref{app:write_order} and \ref{app:tau_choice}\footnote{We recommend the reader reads the full article before going to Appendices \ref{app:write_order} and \ref{app:tau_choice} as it relies on future sections.}. 

For example, consider Figure \ref{fig:relative_model}. If the global machine $M$ updates the local machine $M'$ at global time steps $t = 1$ and $t = 10$, then $k_1 = 1$ and $k_2 = 10$. At global time step $t = k_2$, the local machine $M'$ has only moved forward one local time step, i.e., from $\tau = 1$ to $\tau = 2$. We define this overall model of computation as a \textbf{relative model}.


\begin{definition}[Relative Model $\mathcal{P}(M,M')$]
Consider a global Turing machine $M = TM(\delta, Q, \Gamma)$ that computes the tape state $S_t$ from an initial string $S_0 \in \Gamma^*$. There exists an encoding $M' \subset S_0$ for a local machine $M'$ such that the global machine $M$ simulates $M'$ for time $t \in \mathbb{Z}^+$ acting on a subset $S'_t$ of the tape $S_t$.
\end{definition}

Given the relative model, our objective is to construct a local machine $M'$ that can prove properties about its own simulation. We formalize these properties as \textit{simulation properties} and then immediately show that computing simulation properties is undecidable without querying $M$.
\subsection{Simulation Properties}
\begin{definition}[Simulation Property $R$]
 We define a simulation property $R : \{0,1\}^* \nrightarrow \{0,1\}^*$ as any partial, non-trivial function computed over the runtime tape set $\Delta_{\tau} = \{S_{k_{\tau}+1},...,S_{k_{\tau+1}-1}\}$.
\end{definition}
Here, a non-trivial function $f$ is a function whose image is not singleton, i.e., $\exists x,y: f(x) \ne f(y)$, and a partial function is a function $f : X \nrightarrow \{0,1\}^*$ defined over a subset of the domain, i.e., $X \subseteq \{0,1\}^*$\cite{Manin2010AMathematicians}. The runtime tape set is the set of all global tape states from global time step $k_{\tau}$ to global time step $k_{\tau + 1}$. Naturally, if the local machine cannot read these states, then it cannot compute the image of any function that depends on them.
\begin{problem}[SIMPROPERTY]
Consider a relative model $\mathcal{P}(M,M')$. Construct a local machine $M'$ to compute any simulation property $R$ without querying the solution from $M$.
\end{problem}
\begin{theorem} \label{thm:SIMPROPERTY_undecidable}
SIMPROPERTY is undecidable.
\end{theorem}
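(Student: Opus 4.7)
The plan is to prove undecidability by an adversarial indistinguishability argument. Since a simulation property $R$ is a non-trivial function defined over the runtime tape set $\Delta_\tau = \{S_{k_\tau+1},\ldots,S_{k_{\tau+1}-1}\}$, which consists of global tape states strictly between local clock ticks, and since $M'$ may only read and write on $S'_t$, two global machines that expose the same sequence of local tapes $S'_{k_\tau}$ but produce different runtime sets are indistinguishable from the point of view of $M'$. Combined with the non-triviality of $R$, this will force any purported decider to give inconsistent answers on two inputs it must give different answers on.

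First, I would formalize what it means for $M'$ to compute $R$ without querying $M$. Because $\delta'$ acts on $S'_{k_\tau}$ and $q'_{k_\tau}$ only, the execution trace of $M'$ through its local clock $\tau$ is a deterministic function of the committed sequence $(S'_{k_1}, S'_{k_2}, \ldots)$ and the fixed encoding of $M'$. Therefore any output $M'$ produces is a function of that committed sequence alone. Consequently, any two relative models $\mathcal{P}(M_1, M')$ and $\mathcal{P}(M_2, M')$ that agree on the committed sequence must produce identical outputs from $M'$.

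Next, I would invoke the non-triviality of $R$: by definition there exist runtime tape sets $\Delta^{(1)}$ and $\Delta^{(2)}$ inside the domain of $R$ with $R(\Delta^{(1)}) \neq R(\Delta^{(2)})$. I would then construct two global Turing machines $M_1, M_2$ which both faithfully simulate $M'$ under $\delta'$, producing an identical committed sequence at the local ticks $k_\tau$, but which use their scratch region $\bar{S}_t$ to realize $\Delta^{(1)}$ and $\Delta^{(2)}$ respectively. This is possible because $\bar{S}_t$ is disjoint from $S'_t$ and never observed by $M'$: one can obtain $M_2$ from $M_1$ by padding its update subroutine with alternative but computationally equivalent scratch work that writes distinguishing patterns into $\bar{S}_t$. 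Combining with the previous step, $M'$ must return the same value in both relative models, contradicting $R(\Delta^{(1)}) \neq R(\Delta^{(2)})$.

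The main obstacle I anticipate is the faithful construction of $M_1$ and $M_2$: one must ensure that the padded scratch work in $M_2$ preserves the commitment schedule $K$ and the final updates to $S'_{k_\tau}$ and $q'_{k_\tau}$, so that both machines genuinely instantiate the same relative model $\mathcal{P}(\cdot, M')$. I would handle this by confining all divergent computation to cells in $\bar{S}_t$ and restoring the global head position, internal state, and $S'_{k_\tau}$ to a common value before each update to $S'_{k_{\tau+1}}$. A secondary subtlety is the partiality of $R$: the witnesses $\Delta^{(1)}, \Delta^{(2)}$ must lie in the domain of $R$, which is guaranteed by applying the non-triviality condition within that domain rather than over all of $\{0,1\}^*$. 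An alternative route would be a direct reduction from the Halting problem by encoding the halting behaviour of an arbitrary Turing machine into the scratch work of $M$; this would likely give the cleanest path for the downstream corollaries identifying time, space, and simulation error as specific undecidable simulation properties.
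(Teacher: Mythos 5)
Your proposal is correct and is essentially the paper's argument — both hinge on the fact that $M'$'s behaviour is determined entirely by the committed local states $S'_{k_\tau}$, so it cannot compute any function whose value varies over the inaccessible runtime tape set $\Delta_\tau$. Your two-adversary indistinguishability formalization (realizing two runtime sets $\Delta^{(1)} \neq \Delta^{(2)}$ with identical committed sequences via scratch-tape padding) is a more rigorous rendering of the paper's direct "no read access to $\Delta_\tau$" argument, not a different route.
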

\begin{proof}
We will prove this by contradiction. If $R$ is non-trivial, then the image of $R$ depends on $\Delta_{\tau}$. So, to compute $R$, $M'$ needs access to $\Delta_{\tau}$. However, by definition of the relative clock $\tau$, $M'$ cannot perform any read or write operations on any tape states from $S_{k_{\tau}+1}$ to $S_{k_{\tau+1}}$ and so it cannot operate on any subset of $\Delta_{\tau}$ to compute $R$.
\end{proof}
We now apply Theorem \ref{thm:SIMPROPERTY_undecidable} by showing that computing global time $t$ and global space in $\|S_t\|$ are simulation properties. Suppose the global machine $M$ requires $k_{\tau + 1} - k_\tau$ global time steps and $g_\tau$ bits in $S_t$ to compute the next state of $S'_{k_{\tau+1}}$. We formalize the problem of computing $t$ and $g_\tau$ as the SIMTIME and SIMSPACE problems, respectively, and show that computing them is undecidable in the same sense as the Halting problem by showing they require computing simulation properties.
\begin{problem}[SIMTIME]
Consider a relative model $\mathcal{P}(M,M')$. We denote the set of all global time steps where the global machine $M$ updates the local machine $M'$ as $K = \{k_1, k_2, ... \}$. Then, construct a local machine $M'$ to compute $k_{\tau+1} - k_\tau$ for any $k_\tau \in K$ without querying the solution from $M$.
\end{problem}
\begin{corollary} \label{cor:SIMTIME_undecidable}
Note that $k_{\tau+1} - k_\tau = \|\Delta_\tau\| + 1$ where $\|\Delta_\tau\|$ is the number of tape states in the runtime set $\Delta_\tau$. Then, $k_{\tau+1} - k_\tau$ is a partial function and dependent on the number of tape states in $\Delta_\tau$, i.e., it is non-trivial. Therefore, $k_{\tau+1} - k_\tau$ is a simulation property and thus SIMTIME is undecidable.
\end{corollary}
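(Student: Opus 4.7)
The plan is to reduce SIMTIME to SIMPROPERTY: I would verify that the quantity $k_{\tau+1} - k_\tau$ satisfies the three ingredients in the definition of a simulation property $R$ (a function computed over $\Delta_\tau$, non-trivial, partial) and then invoke Theorem \ref{thm:SIMPROPERTY_undecidable} as a black box. This is a very short reduction in structure; the work is almost entirely in lining up the definitions.

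First, I would express $k_{\tau+1}-k_\tau$ explicitly in terms of $\Delta_\tau$. By the definition of the runtime tape set $\Delta_\tau = \{S_{k_\tau+1},\dots,S_{k_{\tau+1}-1}\}$, its cardinality is $\|\Delta_\tau\| = k_{\tau+1}-k_\tau-1$, so $k_{\tau+1}-k_\tau = \|\Delta_\tau\|+1$. This exhibits it as a function computed over $\Delta_\tau$ in exactly the sense the definition requires. Second, I would argue non-triviality by pointing out that the global machine $M$ may, in general, use a variable amount of scrap space $\bar S_t$ and scrap time to implement $\delta'$ depending on the contents of $M'$ and $S'_{k_\tau}$; consequently there exist inputs whose runtime sets have genuinely different cardinalities, so the image of the function is not singleton. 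Third, partiality is inherited from $M'$ itself: if $M'$ halts, or more generally if no further update time $k_{\tau+1}$ exists, the function is undefined, so its domain is a proper subset of $\{0,1\}^*$. With these three checks in hand, the function is a simulation property $R$, and Theorem \ref{thm:SIMPROPERTY_undecidable} immediately yields undecidability of SIMTIME.

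The main obstacle, and essentially the only step that requires care, is the non-triviality step. One must avoid the trivial failure mode in which a particular choice of $M$ happens to pad every simulation cycle to a fixed length, which would make $k_{\tau+1}-k_\tau$ constant and hence trivial for that specific $\mathcal{P}(M,M')$. To handle this cleanly I would read the problem statement as asking whether there exists \emph{some} relative model in which a local $M'$ can compute the quantity, or equivalently whether the quantity can be computed for \emph{every} relative model; in either reading it suffices to exhibit or assume a global machine whose per-cycle step count varies across inputs, and the preceding proof of Theorem \ref{thm:SIMPROPERTY_undecidable} already forbids $M'$ from reading $\Delta_\tau$ regardless of how uniform or non-uniform the step count happens to be.
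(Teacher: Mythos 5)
Your proposal follows exactly the paper's own argument: you derive the same identity $k_{\tau+1}-k_\tau = \|\Delta_\tau\|+1$, verify that this quantity is a partial, non-trivial function of the runtime tape set $\Delta_\tau$, and then invoke Theorem~\ref{thm:SIMPROPERTY_undecidable} as a black box. Your additional care in justifying non-triviality (the padding failure mode) and partiality (undefined when no further update time exists) goes slightly beyond what the paper writes down, but it is a refinement of the same reduction, not a different route.
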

Notice that SIMTIME can be easily extended for probabilistic $M$ and $M'$ and we do so in Appendix \ref{app:pSIMTIME}.
\begin{problem}[SIMSPACE]
Consider a relative model $\mathcal{P}(M,M')$. We denote the set of all global time steps where the global machine $M$ updates the local machine $M'$ as $K = \{k_1, k_2, ... \}$. Then, for each global time step $k_\tau \in K$, we denote the number of bits required by the global machine $M$ to compute the local machine $M'$ between global times $k_{\tau-1}$ to $k_{\tau}$ as $g_\tau$. Then, construct a local machine $M'$ to compute $g_{\tau}$ for any $\tau \in \mathbb{Z}^+$ without querying the solution from $M$.
\end{problem}
\begin{corollary} \label{cor:SIMSPACE_undecidable}
Computing $g_\tau$ requires counting the number of bits used by $M$ in the runtime set $\Delta_\tau$, which is a partial, non-trivial function of $\Delta_\tau$ and therefore a simulation property. Thus, SIMSPACE is undecidable.
\end{corollary}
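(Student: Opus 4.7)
The plan is to reduce SIMSPACE to SIMPROPERTY by exhibiting $g_\tau$ as a partial, non-trivial function of the runtime tape set $\Delta_\tau$, and then invoking Theorem \ref{thm:SIMPROPERTY_undecidable}. First I would make precise the relationship between $g_\tau$ and $\Delta_\tau$. Since the scrap region $\bar{S}_t$ is disjoint from both the encoding $M'$ and the local tape $S'_t$, the bits used by the global machine during the interval $(k_{\tau-1}, k_\tau)$ can be counted by scanning the tape states $S_{k_{\tau-1}+1}, \dots, S_{k_\tau - 1}$ and tallying the distinct cell indices inside $\bar{S}_t$ whose symbol differs from the default blank in at least one of those states. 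By construction this tally depends only on the entries of $\Delta_\tau$.

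Next I would check the two properties required of a simulation property. Partiality is immediate: the counting rule above is only defined on runtime tape sets actually produced by a valid relative model $\mathcal{P}(M,M')$, since an arbitrary sequence of strings need not correspond to a legitimate trajectory. Non-triviality is established by exhibiting two local machines (or two local times in a single simulation) with distinct values of $g_\tau$: for instance, a local machine whose transition demands no scrap computation versus one whose transition requires at least one scrap cell. Because $g_\tau$ is thus a partial, non-trivial function whose image depends on $\Delta_\tau$, it meets the definition of a simulation property $R$. Applying Theorem \ref{thm:SIMPROPERTY_undecidable} finishes the proof, since $M'$ cannot operate on any state in $\Delta_\tau$ by the definition of the local clock $\tau$.

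The main obstacle I anticipate is the first step, namely cleanly arguing that $g_\tau$ is recoverable from $\Delta_\tau$ alone and not from auxiliary information hidden in $\delta$, $S'_{k_\tau}$, or the encoding of $M'$. One has to be a little careful about whether ``bits required'' means cells written, cells read, or cells touched by the head; in each case the quantity is still determined by $\Delta_\tau$ because the head position and symbol at every intermediate global step are recorded in the corresponding $S_t$, but the reduction must be stated so that this dependence is manifest. Once that bookkeeping is nailed down, the reduction to SIMPROPERTY is essentially automatic, parallel to the treatment of SIMTIME in Corollary \ref{cor:SIMTIME_undecidable} where $\|\Delta_\tau\|$ itself plays the role of the counted quantity.
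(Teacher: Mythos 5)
Your proposal is correct and follows essentially the same route as the paper: the paper's entire argument is the one-sentence observation that $g_\tau$ is a partial, non-trivial function of $\Delta_\tau$ and hence a simulation property, to which Theorem \ref{thm:SIMPROPERTY_undecidable} applies. Your version merely fills in the bookkeeping (how the bit count is recovered from the tape states in $\Delta_\tau$, and explicit witnesses for partiality and non-triviality) that the paper leaves implicit.
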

Both undecidability results for SIMTIME and SIMSPACE can give rise to special relativistic effects when we allow the local machine to query functions from the global machine which are not efficiently computable with respect to the global clock $t$.
\subsubsection{Relative Oracles}
Suppose the local machine $M'$ queries the global machine $M$ to compute a function $f(x)$ at global time $k_{\tau}$. To do so, we create a query state $q'_{f} \in Q'$ such that when the local machine $M'$ enters state $q'_f$, the global machine $M$ computes a function $f(x)$ that operates on a substring $x \subseteq S'_{k_\tau}$ and the global machine writes the output on a portion of $S'_{k_{\tau+1}}$. By Corollaries \ref{cor:SIMTIME_undecidable} and \ref{cor:SIMSPACE_undecidable}, the local machine will not be able to compute the global time or space required by the global machine $M$ to compute $f(x)$. So, we can easily see how from the perspective of the local machine, the function is computed in only one local time step and takes up only the necessary number of bits to write the output. To consider the asymptotic nature of how the global resources required by $M$ increase with respect to the local resources of the local machine, we denote the different resource complexities using the subscript of the clock or space variable under the asymptotic symbol. For example, we say $f(x) = O_\tau(g(n))$, where $x \in \{0,1\}^n$, is equivalent to saying the function $f(x)$ can be computed in $O(g(n))$ local time steps with respect to its local clock $\tau$ under a relative model. Now, we argue that for any queried function $f(x)$, the local machine $M'$ can have better asymptotic computational power relative to its local time $\tau$ and space $\|S_t'\|$, where $\|S_t'\|$ denotes the number of non-blank symbols on $S'_t$, than the global machine $M$ can achieve with respect to the global time $t$ and space $\|S_t\|$.


\begin{problem}[Relative Oracle Problem]
Given any computable function $f \in \mathcal{F}^*$, construct a relative model $\mathcal{P}(M,M')$ that computes $f$ in $O_\tau(1)$ time while only using $O_{\|S_t'\|}(\|f(x)\|))$ space where $\|f(x)\|$ is the number of bits required to encode the output of $f$. 
\end{problem}
\begin{theorem}
There exists a relative model $\mathcal{P}(M,M')$ that solves the Relative Oracle problem. \label{thm:seminary_problem}
\end{theorem}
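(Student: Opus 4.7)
The plan is to exhibit a relative model that realizes the query-state mechanism sketched in the Relative Oracles subsection. Fix a computable $f \in \mathcal{F}^*$ and, by the Church-Turing thesis, let $T_f$ be a Turing machine that computes it. Design $M'$ to contain a distinguished query state $q'_f \in Q'$ and design the global transition function $\delta$ so that whenever $M$ observes $M'$ in state $q'_f$ with argument $x \subseteq S'_{k_\tau}$, it enters a subroutine that simulates $T_f(x)$ entirely on the scrap region $\bar{S}_t$, then copies the result onto the cells of $S'_{k_{\tau+1}}$ immediately following $x$, wipes $\bar{S}_t$, and returns $M'$ to a post-query state. By the bookkeeping convention for $\tau$, this entire bundle of global operations is treated as a single application of $\delta'$, so the local clock advances by exactly one.

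First I would specify $\delta$ carefully enough that the scrap and local regions remain disjoint throughout: $\bar{S}_t$ may be allowed to grow rightward without bound during the simulation of $T_f$, but no symbol of $T_f$'s working tape is ever written into $S'_t$, and nothing is written to $S'_{k_{\tau+1}}$ until $T_f(x)$ halts. Second, I would verify that this construction is a legitimate relative model in the sense of the earlier definition: the local transition $\delta'$ is still a function of $(S'_{k_\tau}, q'_{k_\tau})$, because the output $f(x)$ is a function of $x$ and the query state, and it is written as a single atomic $\delta'$-update at global time $k_{\tau+1}$.

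Third, I would read off the resource bounds by invoking the undecidability results already established. Corollary \ref{cor:SIMTIME_undecidable} says $M'$ cannot detect $k_{\tau+1} - k_\tau$, and Corollary \ref{cor:SIMSPACE_undecidable} says $M'$ cannot detect $g_\tau$; consequently the cost of running $T_f(x)$ leaves no trace on the local clock or the local tape beyond the single $\tau$-tick and the $\|f(x)\|$ output cells. In the notation of the paper, this yields $f(x) = O_\tau(1)$ in local time and $O_{\|S_t'\|}(\|f(x)\|)$ in local space, which is exactly the bound demanded by the Relative Oracle problem.

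I expect the main obstacle to be conceptual rather than technical: one has to be convinced that the local clock $\tau$ really is permitted to stay fixed across the entire global simulation of $T_f(x)$, and that this does not smuggle in a circular use of the undecidability theorems. The honest way to handle this is to point out that $\tau$ is defined by the author as the global time at which $\delta'$ fires, and that $\delta'$ is a black-box object encoded in $M'$; once $\delta'$ is declared to package the query as one step, the undecidability of SIMTIME and SIMSPACE from the local side is what prevents $M'$ from detecting the cheat. There is nothing further to compute, so I would close the argument by remarking that the construction depends only on $f$ being computable, hence applies uniformly to all $f \in \mathcal{F}^*$.
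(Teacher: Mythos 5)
Your proposal is correct and follows essentially the same route as the paper's proof: introduce a query state, have the global machine compute $f(x)$ on the scrap region $\bar{S}_t$, write the output into $S'_t$, and observe that the local clock ticks only once because $\tau$ increments only when the $\delta'$-update completes, giving $O_\tau(1)$ time and output-size space. Your extra appeal to Corollaries \ref{cor:SIMTIME_undecidable} and \ref{cor:SIMSPACE_undecidable} is not needed for the resource bounds themselves (they follow directly from the clock convention, as you note in your final paragraph), but it does not harm the argument.
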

\begin{proof}
Because $f$ is computable, construct a query state for the local machine $M'$ that queries the global machine $M$ to compute $f(x)$ in $\bar{S}_t$. Given the output of $f(x)$, the global machine $M$ requires $\|f(x)\|$ write operations to update $S'_{t}$. Note that the local clock won't increment until the transition function $\delta'$ is computed, which requires writing all of the output of $f(x)$ into $S'_t$ first. Therefore, $\tau$ will only increment after $M$ writes all of $f(x)$ into $S'_t$, hence $O_\tau(1)$ time. Likewise, the encoding of the output of $f$ on $S'_{t}$ will always require $\|f(x)\|$ space on the tape, hence the space complexity is always $\Theta_{\|S_t'\|}(\|f(x)\|))$. 
\end{proof}
 By simply knowing the minimal global complexities for a given function, Theorem \ref{thm:seminary_problem} implies that we can provably construct a local machine that has better local computational powers than the global machine.

\begin{corollary}
For any computable function $f \in \mathcal{F}^*$ that requires $\Omega_t(1)$ global time or $\Omega_{\|S\|}(\|f(x)\|)$ global space to compute but only $\|f(x)\|$ bits to encode the output of $f$, Theorem \ref{thm:seminary_problem} implies there exists a relative model $\mathcal{P}(M,M')$ that can compute $f$ in $\Theta_\tau(1)$ time and $\Theta_{\|S_t'\|}(\|f(x)\|)$ space. \label{cor:joan_wins}
\end{corollary}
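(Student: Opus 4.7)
The plan is essentially to instantiate Theorem \ref{thm:seminary_problem} and read off the matching lower bounds. Nothing new is being constructed here; the corollary repackages the upper bounds of that theorem as tight $\Theta$ bounds, once the trivial lower bounds that apply to any nonvacuous computation are observed.

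First, I would invoke Theorem \ref{thm:seminary_problem} on the given computable $f$ to obtain a relative model $\mathcal{P}(M,M')$ equipped with a query state $q'_f$ whose invocation on input $x$ costs $O_\tau(1)$ local time and $\Theta_{\|S_t'\|}(\|f(x)\|)$ local space. The space side is therefore already tight, so no further work is required for that half of the conclusion.

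Second, for the local time, I would supply the matching lower bound $\Omega_\tau(1)$ by direct appeal to the definition of the local clock in Section \ref{sec:relative_model}: $\tau$ cannot advance until the transition function $\delta'$ is applied at the end of the query, and every query requires at least one such update, so no honest computation can complete in zero local steps. Combined with the $O_\tau(1)$ bound from Theorem \ref{thm:seminary_problem}, this yields $\Theta_\tau(1)$ time and completes the corollary.

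The hypotheses that $f$ requires $\Omega_t(1)$ global time or $\Omega_{\|S\|}(\|f(x)\|)$ global space play no role in establishing the local bounds themselves; they are present only to certify that the relative model is genuinely exhibiting an asymptotic gap between global and local resource use, i.e., that the speedup witnessed by the corollary is nontrivial. I do not anticipate a real obstacle beyond careful bookkeeping: the only delicate point is that the local clock must not increment during the intermediate write operations by which $M$ deposits $f(x)$ onto $S'_t$, and this is precisely the convention pinned down in Section \ref{sec:relative_model} and defended in Appendices \ref{app:write_order} and \ref{app:tau_choice}. Once that convention is in force, the corollary is an immediate consequence of Theorem \ref{thm:seminary_problem}.
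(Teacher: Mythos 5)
Your proposal is correct and follows essentially the same route as the paper, which treats this corollary as an immediate instantiation of Theorem \ref{thm:seminary_problem} (whose proof already establishes the $\Theta_{\|S_t'\|}(\|f(x)\|)$ space bound, the $O_\tau(1)$ time bound via the clock convention, and the trivial one-step lower bound needed for $\Theta_\tau(1)$). Your remark that the $\Omega_t(1)$ and $\Omega_{\|S\|}(\|f(x)\|)$ hypotheses serve only to certify that the local--global gap is nontrivial matches the paper's intent.
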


Corollary \ref{cor:joan_wins} implies that given a computable function whose global time complexity is greater than constant time or the global space complexity is greater than the minimal write space, the local machine can locally be more powerful than the global machine. Corollaries \ref{cor:SIMTIME_undecidable} and \ref{cor:SIMSPACE_undecidable} show that the local machine can't have any information about the global passage of time or utilized space for computing any function $f(x)$ and so it gives rise to special relativistic effects like time dilation and length contraction which are quantified with Lorentz factors \cite{Griffiths2005Electrodynamics}. 

\subsubsection{Lorentz Factors of Relative Models}\label{sec:special_relativity}
The global machine $M$ is observing a global clock $t$ and global space $S_t$ while the local machine $M'$ is observing a local clock $\tau$ and local space $S'_t$. From the global machine's frame, time is moving at a steady pace and all computations are performed at the rate of the global clock $t$. However, Corollaries \ref{cor:SIMTIME_undecidable} and \ref{cor:SIMSPACE_undecidable} prove that the only time and space that the local machine can even compute are its local time $\tau$ and local space $S'_t$. In fact, from the perspective of the local machine, all write operations by the global machine to the local tape $S'_t$ appear to happen simultaneously (Appendix \ref{app:write_order}). Therefore, the local machine's frame only perceives the local time $\tau$ and local space $S'_t$. These effects are equivalent to the time dilation and length contraction effects seen in special relativity \cite{Griffiths2005Electrodynamics}. In special relativity, time dilation and length contraction effects occur when we compute the time and length intervals of events from two observers. To experience these effects, one observer must have a relative velocity close to the speed of light with respect to the other observer. Then, time dilation is when the time experienced by one observer is faster than the time observed by another observer. Likewise, length contraction is the effect where each observer experiences a different length scale for the same object. In a relative model, time dilation occurs between the global and local times of the models and length contraction occurs for the required number of bits to perform a computation. Unlike in special relativity where the Lorentz factor is determined by the relative velocity between observers, in our model the queried function determines the Lorentz factors for time and space resources. We compute a Lorentz factor for time dilation between the global time $t$ and local time $\tau$ via the ratio between global and local time intervals, i.e.,
\begin{equation}
    \gamma_{k_\tau} = \frac{k_{\tau+1} - k_\tau}{\tau + 1 - \tau} = k_{\tau+1} - k_\tau. \label{eq:lorentz_time}
\end{equation}
However, the Lorentz factor for length contraction is computed using the global space ratio to be
\begin{equation}
    \gamma_{g_\tau} = \frac{g_{\tau+1} - g_\tau}{\|f(x)\|},\label{eq:lorentz_space}
\end{equation}
where $f(x)$ is the output of any computation performed in $\bar{S}_t$ and copied over to $S'_t$. In physics, the Lorentz factor for time and space are equal but for relative models the Lorentz factors for time and space depend on the particular function $f(x)$ and global machine $M$ \footnote{Furthering the analogy, one can imagine that if $\gamma_{k_\tau}$ is constant for an interval of local times $\tau$, then the computation time is proceeding at a constant velocity. However, if $\gamma_{k_\tau}$ varies between local time intervals, then the computation is undergoing an acceleration. Hence, there may be a connection to general relativity through accelerating frames of relative models.}. Due to this dependence, we can immediately see that the Lorentz factors are also simulation properties.
\begin{corollary}
Because the Lorentz factors (\ref{eq:lorentz_time}) and (\ref{eq:lorentz_space}) are dependent on computing SIMTIME and SIMSPACE, which are both simulation properties, the Lorentz factors are also simulation properties and therefore undecidable by Theorem \ref{thm:SIMPROPERTY_undecidable}.
\end{corollary}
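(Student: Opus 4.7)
My plan is to invoke Theorem \ref{thm:SIMPROPERTY_undecidable} twice, once for each Lorentz factor, after establishing that each is a partial, non-trivial function over the runtime tape set $\Delta_\tau$. The strategy is essentially a pair of reductions: the Lorentz factors are defined directly in terms of the quantities appearing in SIMTIME and SIMSPACE, both of which are already established to be simulation properties in Corollaries \ref{cor:SIMTIME_undecidable} and \ref{cor:SIMSPACE_undecidable}.

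First I would dispatch the time Lorentz factor $\gamma_{k_\tau}$. Equation (\ref{eq:lorentz_time}) collapses to the identity $\gamma_{k_\tau} = k_{\tau+1} - k_\tau$, which is exactly what SIMTIME asks $M'$ to compute. Since Corollary \ref{cor:SIMTIME_undecidable} already showed this quantity is a partial, non-trivial function of $\Delta_\tau$, the same characterization immediately labels $\gamma_{k_\tau}$ a simulation property, and Theorem \ref{thm:SIMPROPERTY_undecidable} gives undecidability.

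Next I would handle the space Lorentz factor $\gamma_{g_\tau}$. The denominator $\|f(x)\|$ is the length of the queried output, which by the construction in Theorem \ref{thm:seminary_problem} is written into $S'_t$ before $\tau$ increments and so is locally readable by $M'$. The numerator $g_{\tau+1} - g_\tau$ counts bits used by $M$ on the scrap portion $\bar{S}_t$ during the relevant global interval, which by Corollary \ref{cor:SIMSPACE_undecidable} is a simulation property. Because one factor of the ratio depends on $\Delta_\tau$ while the other is locally accessible, the ratio itself remains a partial, non-trivial function of $\Delta_\tau$. Equivalently, any decision procedure for $\gamma_{g_\tau}$ would yield one for SIMSPACE by multiplying by the locally known $\|f(x)\|$, contradicting Corollary \ref{cor:SIMSPACE_undecidable}. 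Applying Theorem \ref{thm:SIMPROPERTY_undecidable} finishes this case.

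The main subtlety I anticipate is justifying that $\|f(x)\|$ may be treated as locally computable without smuggling in global information; otherwise the reduction from SIMSPACE to $\gamma_{g_\tau}$ is not rigorous. I would address this by pointing back to the setup of Theorem \ref{thm:seminary_problem}, where the output of $f$ is explicitly copied into $S'_t$ before the local clock advances, so by the time $M'$ attempts to evaluate $\gamma_{g_\tau}$ at local time $\tau+1$ the output and hence its bit-length are already materialized on the local tape.
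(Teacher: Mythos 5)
Your proposal takes essentially the same route as the paper: the corollary is proved there in a single sentence by observing that both Lorentz factors are determined by the SIMTIME and SIMSPACE quantities, hence are themselves simulation properties and undecidable by Theorem \ref{thm:SIMPROPERTY_undecidable}. Your additional care in checking that the denominator $\|f(x)\|$ of $\gamma_{g_\tau}$ is locally readable (so the ratio's undecidability genuinely rests on the numerator) goes beyond what the paper writes down, but it is a refinement of the same argument rather than a different one.
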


 Using these special relativistic effects, the local machine is effectively a local oracle model that only perceives its own time and space \cite{Arora2009ComputationalApproach}. We include a discussion on comparing relative models and oracle models in Appendix \ref{app:oracles}. This local oracle model allows us to construct a fully classical global model that locally preserves the exponential speed-ups of quantum mechanics. However, before we can construct this local quantum model, we have to consider a necessary component to simulating quantum mechanics, error. 

 \subsection{Undecidability of Computing Global Machine Error}
 Suppose that the global machine $M$ wants to save global time and space resources by approximating a queried function $f(x)$ instead of fully computing it. The function that the local machine uses to compute the error incurred by the global machine's approximation is no longer a simulation property so Theorem \ref{thm:SIMPROPERTY_undecidable} is not applicable. Clearly, if the machines can perfectly encode $f(x)$ and the global machine returns an approximation $\Tilde{f}(x)$ to $f(x)$, and the local and global machines both have access to $x$, then the local machine can simply compute $f(x)$ and catch the global machine in an error. However, if the local machine only has approximations to $x$ in the form of a set $\Tilde{X}$ of possible $x$ (for example, some range in $\mathbb{Z}$), then the global machine can approximate $f(x)$ be leveraging the inherent error in $\Tilde{X}$ and returning a string from the set of possible strings $\{f(\Tilde{x}) : \Tilde{x} \in \Tilde{X}\}$. This kind of scenario arises when performing measurements. Suppose there exists a section of the tape labeled $m_t \subset S'_t$ that contains a string $x$ at time $t$. Suppose that due to the encoding of $M'$, it is impossible to perfectly measure $x$. In other words, whenever $M'$ tries to encode $x$ in a separate portion of the tape, $x$ is approximated to be $x'$ by sampling from a measurement function $x' \sim g(x)$. However, $x'$ is permanently stored in the local machine's tape. So, it can reference $x'$ whenever it needs. Given a measured $x'$, there exists a set $\Tilde{X}$ of possible $x$ that could have generated $x'$. Now, at some time $t_2$, $M$ will compute $f(x)$ and overwrite $m_t$. $M$ wants to exploit the uncertainty of $\Tilde{X}$ from measurements to approximate $f(x)$ using a function $\Tilde{f}(x)$ without being detected by $M'$. We formalize the problem of determining if the global machine returned $\Tilde{f}(x)$ as follows.
 
\begin{problem}[MEASURE]
Consider a relative model $\mathcal{P}(M,M')$. Let $\Tilde{f}(x)$ be an approximation of a computable function $f(x) \in \mathcal{F}^*$. Suppose there exists some string $x \subset S'_t$ which $M'$ measures with a function $g(x)$ that returns a value $x' \sim g(x)$. Define $\Tilde{X}$ to be the set of all possible $x$ given a measurement $x' \sim g$. Suppose the local machine $M'$ queries the function $f(x)$ at time $k_\tau$ and the global machine $M$ randomly chooses to compute $\Tilde{f}(x)$ or $f(x)$ and encode the solution in a substring $y \subset S'_{k_{\tau+1}}$. Then, construct a local machine $M'$ to determine if the global machine $M$ computed $\Tilde{f}(x)$ or $f(x)$.
\end{problem}
\begin{theorem}
MEASURE is undecidable when the approximation error of $\Tilde{f}(x)$ is within the possible values of $f(x)$ introduced by the approximation of $x$ by the set $\Tilde{X}$, i.e., $\Tilde{f}(x) \cup (\bigcup_{\Tilde{x} \in \Tilde{X}}f(\Tilde{x})) = \bigcup_{\Tilde{x}\in \Tilde{X}}f(\Tilde{x})$. \label{thm:undecidable_abs_error}
\end{theorem}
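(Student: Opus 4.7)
The plan is to prove Theorem \ref{thm:undecidable_abs_error} by contradiction, exhibiting two globally distinct executions of $\mathcal{P}(M,M')$ that are indistinguishable from the local machine's frame, so that any purported decision procedure for MEASURE must answer identically on both, yielding the wrong answer on one. The hypothesis $\tilde{f}(x) \cup \bigl(\bigcup_{\tilde{x}\in\tilde{X}} f(\tilde{x})\bigr) = \bigcup_{\tilde{x}\in\tilde{X}} f(\tilde{x})$ is precisely the algebraic statement that makes this indistinguishability possible: it says the approximate output is already in the legitimate image of $f$ over values compatible with the measurement.

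First I would pin down exactly what $M'$ has access to when it must render a verdict. The true string $x \subset S'_t$ has been overwritten (or is only known through the measurement $x' \sim g(x)$), so $M'$'s view of the query consists of the measured record $x'$, the set $\tilde{X}$ of preimages under $g$, the queried function $f$ (which $M'$ is free to recompute on any $\tilde{x} \in \tilde{X}$ using its own local tape), and the returned substring $y \subset S'_{k_{\tau+1}}$. Crucially, by Theorem \ref{thm:SIMPROPERTY_undecidable} and the definition of $\mathcal{P}(M,M')$, $M'$ cannot inspect $\bar{S}_t$ or any intermediate tape state in $\Delta_\tau$, so it cannot directly witness whether $M$ ran $f$ or $\tilde{f}$; everything it learns about that branch must be inferred from $(x',\tilde{X},f,y)$.

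Next, using the hypothesis, I would construct the twin executions. In Scenario A, the hidden true value is some $x \in \tilde{X}$ and $M$ chooses the approximation branch, writing $y := \tilde{f}(x)$. By the hypothesis, $\tilde{f}(x)$ lies in $\bigcup_{\tilde{x}\in\tilde{X}} f(\tilde{x})$, so pick $\tilde{x}^\star \in \tilde{X}$ with $f(\tilde{x}^\star) = \tilde{f}(x)$. In Scenario B, the hidden true value is $\tilde{x}^\star$ and $M$ chooses the exact branch, writing $y := f(\tilde{x}^\star)$. Both scenarios are admissible: the hidden value lies in $\tilde{X}$ in each case, so each is consistent with the recorded measurement $x'$, and the local machine observes the same $y$. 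Hence the tuple $(x',\tilde{X},f,y)$ available to $M'$ is identical in the two scenarios, and any Turing-computable predicate on $M'$'s local tape returns the same Boolean value in both. But the correct MEASURE answer is true for A and false for B, contradicting correctness of the supposed decider.

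The main obstacle, and the place where care is required, is rigorously ruling out information leakage through side channels. Two worries arise: (i) that $M'$ might probe $y$ or $x'$ in ways that implicitly reveal global timing or space resources, and (ii) that over repeated queries $M'$ could amplify a statistical signal. Concern (i) is dispatched by the Lorentz-factor observations in Section \ref{sec:special_relativity} together with Corollaries \ref{cor:SIMTIME_undecidable} and \ref{cor:SIMSPACE_undecidable}: global time and space spent on either branch are themselves simulation properties and therefore invisible to $M'$. Concern (ii) must be handled by noting that the theorem is stated per-query, and that the indistinguishability argument applies to each query independently; an extension to the probabilistic setting would parallel the pSIMTIME treatment in Appendix \ref{app:pSIMTIME}, but is not required here. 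Once these are addressed, the two-scenario argument closes the proof.
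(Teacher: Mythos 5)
Your proof is correct and follows essentially the same route as the paper's: the hypothesis $\Tilde{f}(x)\in\bigcup_{\Tilde{x}\in\Tilde{X}}f(\Tilde{x})$ places the approximate output inside the legitimate image of $f$ over the measurement-compatible preimages, so the local machine cannot attribute $y$ to $\Tilde{f}$ without knowing which element of $\Tilde{X}$ is the true $x$ — exactly the contradiction the paper draws. Your twin-execution framing is a slightly tighter formalization, since it rules out every decider at once rather than analyzing only the natural image-membership strategy that the paper's proof examines, but the underlying indistinguishability argument is the same.
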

\begin{proof}
We will prove this by contradiction. Assume the local machine $M'$ can determine which function the global machine $M$ used. Because $M'$ doesn't know $x$, but instead it has a set $\Tilde{X}$ of possible $x$, the local machine $M'$ must compute the image of $f(x)$ for all possible values of $x$ in $\Tilde{X}$ and see if the output $y \in S'_t$ is in the image of $f$ over $\Tilde{X}$, i.e., $M'$ computes the set $\textbf{Image}(f)_{\Tilde{X}} = \{f(x_i) : \forall x_i \in \Tilde{X}\}$. If the returned output $y$ lies outside of $\textbf{Image}(f)_{\Tilde{X}}$, then the local machine $M'$ can conclude that the global machine $M$ used $\Tilde{f}$. However, if it lies inside of $\textbf{Image}(f)_{\Tilde{X}}$ such that $y = f(x_i)$ for some $x_i \in \Tilde{X}$, then for $M'$ to know that $y$ wasn't computed by $f(x)$, it must know that $x_i \ne x$, which contradicts the definition of $\Tilde{X}$. Therefore, if $\Tilde{f}(x) \in \bigcup_{x_i \in \Tilde{X}} f(x)$, then $M'$ cannot determine which function $M$ used.
\end{proof}

We can easily extend the context of Theorem \ref{thm:undecidable_abs_error} to show an even more powerful result. Namely, we show that the global machine $M$ can always trick the local machine $M'$ to accept in $O_\tau(1)$ time so long as there exists at least one state in which $M'$ will accept and it is guaranteed to be computable in a finite amount of steps. This means no matter what the error introduced by an approximation, the global machine can always force the local machine into a known state. However, this comes at the cost of global time.
  
\begin{theorem}
  Consider a relative model $\mathcal{P}(M,M')$. Suppose it is guaranteed that $M'$ will enter the ACCEPT or REJECT states within some finite local time $\mathcal{T}$ after the current local time $\tau$, and there exists at least one state $S'_{\text{ACCEPT}}$ such that $M'$ will accept given the current setting of the tape $S'_{k_\tau}$ and state $q'_{k_\tau}$. Then, $M$ can always prepare $S'_{k_{\tau+1}} = S'_{\text{ACCEPT}}$ to guarantee a later accept in $O_\tau(1)$ time. \label{thm:spoof_ACCEPT}
 \end{theorem}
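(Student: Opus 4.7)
The plan is to combine the undecidability of SIMTIME (Corollary \ref{cor:SIMTIME_undecidable}) with the relative oracle construction of Theorem \ref{thm:seminary_problem}: since the local clock $\tau$ is blind to anything happening in $\bar{S}_t$, the global machine can perform an arbitrarily long search on its scratch portion and then announce the chosen configuration to $S'_t$ in a single local tick.

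First I would have $M$ use $\bar{S}_t$ to unfold the computation tree of $M'$ rooted at the current configuration $(S'_{k_\tau}, q'_{k_\tau})$. By hypothesis every branch terminates in either ACCEPT or REJECT within at most $\mathcal{T}$ local steps, and the tape, state, and head positions reachable inside that horizon are all drawn from finite sets, so the unfolding terminates in a bounded (though potentially large) number of global steps. By the existence hypothesis, at least one leaf of this tree is an ACCEPT configuration; fix such a leaf and trace back along its branch to identify a configuration $S'_{\text{ACCEPT}}$ (tape contents plus local machine state) from which continued simulation under $\delta'$ is forced to accept.

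Next I would have $M$ overwrite $S'_t$ and $q'_t$ with $S'_{\text{ACCEPT}}$ and increment $\tau$ exactly once, so that $S'_{k_{\tau+1}} = S'_{\text{ACCEPT}}$. The global time spent on the tree search plus the write equals $k_{\tau+1}-k_\tau$, a difference that lives entirely in the runtime set $\Delta_\tau$ and is therefore a simulation property. By Corollary \ref{cor:SIMTIME_undecidable} $M'$ cannot observe or compute this interval, so from the local frame the preparation costs $O_\tau(1)$, exactly as in the relative-oracle construction of Theorem \ref{thm:seminary_problem}. Normal simulation from $S'_{\text{ACCEPT}}$ then produces the ``later accept'' within the promised horizon.

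The main obstacle is the case in which $\delta'$ is probabilistic, since one might worry that forcing a particular branch alters the distribution $M'$ ``thinks'' it is sampling from. I would resolve this by noting that the random bits consumed by the simulated $\delta'$ are themselves just cells of $\bar{S}_t$: $M$ is free to try every assignment it wants during the tree unfolding and then hard-wire the chosen accepting sequence as the realized randomness, which is indistinguishable to $M'$ from a lucky draw and, as an operation on $\bar{S}_t$, is covered by Theorem \ref{thm:SIMPROPERTY_undecidable}. For deterministic $M'$ the theorem collapses to the observation that a single global write of $S'_{\text{ACCEPT}}$ suffices, with the undecidability of SIMTIME again compressing the whole preparation into one local clock tick.
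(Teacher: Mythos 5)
Your proposal is correct and follows essentially the same route as the paper's proof: an exhaustive search over candidate configurations simulated on $\bar{S}_t$ for at most $\mathcal{T}$ local steps each, guaranteed to succeed by the existence hypothesis, followed by a single write of $S'_{\text{ACCEPT}}$ into $S'_{k_{\tau+1}}$, with the local-clock convention (equivalently, the undecidability of SIMTIME) hiding the global search cost so the preparation is $O_\tau(1)$. Your additional handling of probabilistic $\delta'$ and the explicit finiteness argument for the tree unfolding are refinements the paper does not spell out, but they do not change the underlying argument.
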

\begin{proof}
  Simply have the global machine $M$ simulate $M'$ up until local time $\mathcal{T} + \tau$ on a section of the tape $\bar{S}_t$ outside of $S'_t$ with different tape states until $M$ finds one that will force $M'$ to accept by local time $\tau + \mathcal{T}$. Because there exists at least one initial state of the tape $S'_{\text{ACCEPT}}$ where $M'$ will accept, $M$ is guaranteed to eventually find it. Once it does so, prepare the next state of the tape $S'_{k_{\tau+1}}$ to be $S'_{\text{ACCEPT}}$ for the next local time step.
\end{proof}

We can see that due to Theorems \ref{thm:undecidable_abs_error} and \ref{thm:spoof_ACCEPT}, the global machine $M$ can force $M'$ to not perceive the error either by returning a value within the allowed error that $M'$ will accept, i.e., Theorem \ref{thm:undecidable_abs_error}, or it can force it to accept that no error has occurred, i.e., Theorem \ref{thm:spoof_ACCEPT}. Now we have everything in place to simulate quantum mechanics with the relative model.

\subsection{A Relative Model for Quantum State Evolution}
  Following the Church-Turing-Deutsch Principle, if we can compute quantum mechanics, then we can simulate any process in the universe to an arbitrary accuracy (assuming that the laws of quantum physics can completely describe every physical process)\cite{Deutsch1985QUANTUMCOMPUTER.,Lloyd1996UniversalSimulators}. So, we show how to simulate quantum mechanics using a relative model $\mathcal{P}(M,M')$ with the same local efficiency that we experience in our universe. Let $M$ be a global Turing machine and let $M'$ encode the Schr\"{o}dinger equation acting on a set of qubits. Because the global machine $M$ has access to the entire tape, it can compute the Hamiltonian $\hat{H}(\tau)$ for any subset of particles in the universe at any local time $\tau$. However, to satisfy the third law of thermodynamics and avoid zero entropy, every particle must be coupled to at least one other particle via intermediate interactions\cite{Deutsch1985QUANTUMCOMPUTER.}. So, to accurately describe the time evolution of the universe, $M$ must construct a Hamiltonian $\hat{H}$ and quantum state vector $\vert \psi(\tau) \rangle$ for the entire universe at every local time step $\tau$. By Theorem \ref{thm:seminary_problem}, this can be done in $O_\tau(1)$ time between every local time step of the universe. Luckily, the universe is a closed quantum system with a time-independent Hamiltonian $\hat{H}$ \cite{Sakurai2017ModernNapolitano.}. We don't need a measurement postulate in this context because we are describing the entire universe as a single state vector and assume that a measurement collapse can be encoded as a byproduct of the quantum state evolution \cite{Masanes2019TheRedundant}. Consider a quantum state vector
  \begin{align}
      \vert \psi \rangle = c_1 \vert \psi_1 \rangle + ... + c_{2^n} \vert \psi_{2^n} \rangle.
  \end{align}
    We can encode any quantum state $\vert \psi \rangle$ in $S'_t$ and update it at every local clock time in $O_\tau(1)$ time. So, we introduce a local time-dependence $\tau$ on the state $\vert \psi(\tau) \rangle$ and compute it using the Schr\"{o}dinger equation 
  \begin{equation}
      i \hbar \frac{d}{d\tau} \vert \psi(\tau) \rangle = \hat{H} \vert \psi(\tau) \rangle. \label{eq:schrodinger}
  \end{equation}
   Using the time-independent Hamiltonian $\hat{H}$, we can solve (\ref{eq:schrodinger}) using the time-independent Schr\"{o}dinger equation 
  \begin{align}
      \vert \psi(\tau) \rangle = e^{-\frac{i}{\hbar} \hat{H} \tau}\vert \psi(0) \rangle \label{eq:time_ind_schrod}
  \end{align}
  To compute (\ref{eq:time_ind_schrod}) to an arbitrary accuracy, we expand the exponential explicitly as a sum of matrix products
  \begin{align}
      e^{-\frac{i}{\hbar} \hat{H} \tau} = \sum_{j = 0}^{\infty}\frac{(-i \tau/\hbar)^j}{j!}\hat{H}^j. \label{eq:exp_expanded}
  \end{align}
  Immediately we see an issue. If the infinite sum (\ref{eq:exp_expanded}) cannot be simplified, then the summation is required to go to infinity. So, the global machine $M$ will never finish computing (\ref{eq:exp_expanded}). However, we can approximate (\ref{eq:exp_expanded}) up to some maximum term and return an approximation $\vert \Tilde{\psi}(\tau) \rangle$\footnote{An added bonus of the time-independent Schrodinger equation is that the error does not accumulate between local time steps $\tau$ because one can compute the state $\vert \psi(\tau) \rangle$ using only $\hat{H}$, $\tau$, and $\vert \psi(0) \rangle$.}. By Theorem \ref{thm:undecidable_abs_error}, the global machine $M$ only needs to compute the approximation $\vert \Tilde{\psi}(\tau) \rangle$ using (\ref{eq:exp_expanded}) within an allowable error that makes the problem of detecting the error by the local machine $M'$ undecidable. Likewise, by Corollaries \ref{cor:SIMTIME_undecidable} and \ref{cor:SIMSPACE_undecidable}, the local machine $M'$ will not perceive the exponentially large amount of time or space required for $M$ to compute (\ref{eq:exp_expanded}) within this error.   
  \begin{corollary}
  There exists a relative model $\mathcal{P}(M,M')$ that approximates the Schr\"{o}dinger equation $f(\vert\psi(\tau)\rangle)$ within an arbitrary error in $O_\tau(1)$ time and $\Theta_{\|S_t'\|}(\|f(\vert\psi\rangle)\|)$ space where $\|f(\vert\psi\rangle)\|$ is the desired encoding precision. \label{cor:schrodinger}
  \end{corollary}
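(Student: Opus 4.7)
The plan is to get this corollary by chaining Theorem \ref{thm:seminary_problem} and Theorem \ref{thm:undecidable_abs_error} on top of the series expansion (\ref{eq:exp_expanded}). First I would define the concrete function that the relative model will actually compute: let $\tilde{f}_N(|\psi(0)\rangle, \hat{H}, \tau)$ be the truncation of (\ref{eq:time_ind_schrod}) obtained by cutting (\ref{eq:exp_expanded}) at index $N$. For every fixed finite $N$ this is a finite sum of matrix products and scalar rationals, hence an element of $\mathcal{F}^*$, i.e.\ a (total) computable function. The output requires exactly $\|\tilde{f}_N(|\psi\rangle)\|$ bits, which is the precision at which $M'$ wishes to store $|\tilde{\psi}(\tau)\rangle$ on $S'_t$.

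Next I would invoke Theorem \ref{thm:seminary_problem} with $f := \tilde{f}_N$. The construction in that theorem already gives us a query state $q'_{\tilde{f}_N} \in Q'$ so that, whenever $M'$ enters it, the global machine $M$ carries out the matrix-exponential truncation on its scrap region $\bar{S}_t$ and only afterwards copies the $\|\tilde{f}_N(|\psi\rangle)\|$-bit result into $S'_t$ and fires $\delta'$. Because $\tau$ is incremented only on that final $\delta'$-update, the whole evolution step appears to $M'$ as a single local tick, giving $O_\tau(1)$ time, and the written region is exactly $\Theta_{\|S_t'\|}(\|\tilde{f}_N(|\psi\rangle)\|)$ wide, as required. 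Corollaries \ref{cor:SIMTIME_undecidable} and \ref{cor:SIMSPACE_undecidable} then ensure that the (possibly astronomical) global cost of the truncation at $\bar{S}_t$ is simply not a quantity $M'$ can ever compute.

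It remains to justify the phrase \emph{within an arbitrary error}. Here I would apply Theorem \ref{thm:undecidable_abs_error} to the quantum setting: $M'$ only ever has access to $|\psi(0)\rangle$ through some measurement function $g$, so the set $\tilde{X}$ of candidate true initial states consistent with what $M'$ sees is non-trivial. Choose $N = N(\varepsilon)$ large enough that $\|\,\tilde{f}_N - f\,\| < \varepsilon$ with $\varepsilon$ below the diameter of $\{f(\tilde{x}) : \tilde{x} \in \tilde{X}\}$ in whatever observable norm the apparatus of $M'$ resolves; by Theorem \ref{thm:undecidable_abs_error} the output $|\tilde{\psi}(\tau)\rangle$ then lies in $\bigcup_{\tilde{x}\in\tilde{X}} f(\tilde{x})$ and $M'$ cannot decide that an approximation was used. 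Sharpening $\varepsilon$ only increases $N$ and hence the global workload, which by the two undecidability corollaries remains invisible locally.

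The main obstacle I expect is not the one-step argument but ruling out error accumulation across the many local ticks needed to propagate $|\psi(\tau)\rangle$ forward: Theorem \ref{thm:undecidable_abs_error} is stated for a single query, and a naive iteration $|\tilde{\psi}(\tau{+}1)\rangle \leftarrow \tilde{f}_N(|\tilde{\psi}(\tau)\rangle)$ could compound deviations until they escape the measurement set $\tilde{X}$. I would sidestep this exactly as the footnote after (\ref{eq:exp_expanded}) suggests: because $\hat{H}$ is time-independent, $M$ should recompute $|\tilde{\psi}(\tau)\rangle$ from the stored triple $(\hat{H},\tau,|\psi(0)\rangle)$ at each query rather than from the previous approximation. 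The single-step error bound from $\tilde{f}_N$ then applies uniformly in $\tau$, so Theorem \ref{thm:undecidable_abs_error} keeps the evolution undetectably close to the true Schrödinger trajectory for all $\tau$, completing the corollary.
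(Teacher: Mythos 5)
Your proposal matches the paper's own justification essentially step for step: the paper likewise truncates the series expansion (\ref{eq:exp_expanded}), invokes Theorem \ref{thm:seminary_problem} for the $O_\tau(1)$ time and $\Theta_{\|S_t'\|}(\|f(\vert\psi\rangle)\|)$ space bounds, appeals to Theorem \ref{thm:undecidable_abs_error} together with Corollaries \ref{cor:SIMTIME_undecidable} and \ref{cor:SIMSPACE_undecidable} for the undetectability of the error and of the global overhead, and handles error accumulation exactly as you do, by recomputing $\vert\psi(\tau)\rangle$ from $(\hat{H},\tau,\vert\psi(0)\rangle)$ at each step using the time-independent Hamiltonian. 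The only difference is presentational: the paper folds this argument into the discussion preceding the corollary rather than a formal proof, whereas you have made the dependence on each cited result explicit.
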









\section{Discussion}\label{sec12}

\subsection{Relative Church-Turing-Deutsch Thesis}
Theorem \ref{thm:seminary_problem} gives us a method of relativizing the physical Church-Turing thesis \cite{Piccinini2007ComputationalismFallacy}. Consider the scenario where our universe is constructed as the local machine $M'$ in a relative model. There is a global machine $M$ computing the physical laws encoded in $M'$. By Corollaries \ref{cor:SIMTIME_undecidable} and \ref{cor:SIMSPACE_undecidable} as discussed in \ref{sec:special_relativity}, we experience the local time $\tau$ and local space $S'_t$ to our universe and cannot perceive the global time $t$ or space $S$ taken by the global machine. By Corollary \ref{cor:schrodinger}, we know it is possible for quantum mechanics to be computed in $\Theta_\tau(1)$ time as we experience in our universe. Therefore, it is entirely possible that a physical Church-Turing thesis is globally and physically true for all computational models in our universe; meaning Turing machines are the only computational models that obey the global mathematics of our universe and our information theoretic intuitions about the process of computation. However, locally we have access to a physical quantum Church-Turing thesis through the queried functions that govern quantum mechanics, like the Schr\"{o}dinger equation\cite{Kaye2006AnComputing}. The special relativistic effects from Corollaries \ref{cor:SIMTIME_undecidable} and \ref{cor:SIMSPACE_undecidable} guarantee that it's impossible for us to perceive the exponential overhead in time and space required to compute quantum mechanics with the global Turing machine.


Unfortunately, our measurement devices always have some guaranteed error due to encoding. So long as the error introduced by approximating the Schr\"{o}dinger equation (\ref{eq:exp_expanded}) is within the error of all of our measurement instruments, Theorem \ref{thm:undecidable_abs_error} protects us from detecting the error in the universe's computation. Not to mention, we don't know a priori the function that the universe is computing. Theorem \ref{thm:undecidable_abs_error} relies on an agreed function $f(x)$ but we've only deduced the Schr\"{o}dinger equation from observations. The physicist's assumption that whatever the universe provides is inherently a law of the universe means that we will accept any approximation from the universe as \textbf{the} law. So, all errors from the universe are undetectable.  Putting all of this together, we have a more comprehensive simulation hypothesis of the universe that preserves any physical laws as transition rules and queried functions.

\section{Conclusion}\label{sec13}
In this work, we study two questions at the heart of the simulation hypothesis, i) is it possible for a simulation to decide it is a simulation and ii) is quantum mechanics compatible with a classical simulation hypothesis? To address these questions, we construct a relative model of computation involving a global Turing machine $M$ simulating a local machine $M'$. Global machine $M$ bestows more computational power to local machine $M'$ relative to the local machine's space and time resources by allowing $M'$ to query $M$ to compute any function for it on its next update. We show that the problem of the local machine $M'$ computing the simulation time, space, or error is undecidable in the same sense as the Halting problem. We further prove a connection to the time dilation and length contraction effects seen in special relativity and we propose a relative Church-Turing-Deutsch thesis where a classical, global simulator constructs a local quantum mechanical model for the observable universe. We show that this model preserves both the classical Church-Turing thesis from the global machine's global frame and the quantum Church-Turing thesis from the frame of our observable universe. Therefore, our model preserves the exponential speed-up given by quantum computers and provides a reconciliation between the intuitive notion of Turing machines and the unintuitive linear time evolution of an exponentially large quantum state in Hilbert space.

\backmatter

\bmhead{Acknowledgments}

We'd like to acknowledge the following people for their input on this work and essential discussions: Vahagn Mhkitaryan, Sam Peana and Colton Fruhling.


\bibliography{references}

\begin{appendices}

\section{Local Write Order}\label{app:write_order}
Up until now, the choice of when to increment the local time might have seemed arbitrary considering the fact that several write operations occur just before we increment the local clock. We provide a more extensive discussion on our choice of convention for the local clock and show that the local machine cannot perceive the write operations occurring separately due to another undecidability result. Here, we prime that discussion by showing how the local machine perceives all updates in between two local time steps $\tau$ and $\tau + 1$ as occurring simultaneously. We can show this is the case by proving it is impossible for the local machine to compute the order in which the write operations occur just before a local time step. To do so, we first denote a write operation as $W(b,s)$ where the symbol $b \in \Gamma$ is written at the location on the tape $S_t$ indexed by $s$.

\begin{problem}[WRITEORDER]
Consider a relative model $\mathcal{P}(M,M')$. We denote the set of global time steps $k_\tau \in K$ where the global machine $M$ updates the local machine $M'$ in accordance with its transition function $\delta'$ as $K = \{k_1, k_2, ... \}$ such that $\delta'(S'_{k_\tau}, q'_{k_\tau}) \rightarrow \{S'_{k_{\tau+1}}, q'_{k_{\tau+1}}\}$. For each global time step $k_\tau \in K$, there exists an ordered set of write operations $\mathcal{W}_{\tau} = \{W(b_1,s_1) ,.., W(b_{\|\mathcal{W}\|}, s_{ \|\mathcal{W} \| } )\}$ required by $M$ to update the tape state and machine state of the local machine $M'$. Construct a local machine $M'$ to compute $\mathcal{W}_{\tau}$ without querying $M$ for the solution.
\end{problem}
\begin{corollary}
WRITEORDER is undecidable. \label{thm:writeorder_und}
\end{corollary}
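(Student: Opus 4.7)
The plan is to reduce WRITEORDER directly to SIMPROPERTY and invoke Theorem \ref{thm:SIMPROPERTY_undecidable}, exactly mirroring the pattern used in Corollaries \ref{cor:SIMTIME_undecidable} and \ref{cor:SIMSPACE_undecidable}. Concretely, I would exhibit $\mathcal{W}_\tau$ as a partial, non-trivial function of the runtime tape set $\Delta_\tau = \{S_{k_\tau+1},\dots,S_{k_{\tau+1}-1}\}$, since this is the definition a simulation property must meet.

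First, I would argue that $\mathcal{W}_\tau$ is genuinely a function of $\Delta_\tau$. Every write operation $W(b_i, s_i) \in \mathcal{W}_\tau$ is realized as exactly the difference between two consecutive tape states in the interval $[k_\tau, k_{\tau+1}]$: the $i$-th write corresponds to the unique position and symbol at which $S_{k_\tau + i-1}$ and $S_{k_\tau + i}$ disagree (or at which the head of $M$ performs its stipulated write, per $\delta$). Consequently, the ordered sequence $\mathcal{W}_\tau$ can be read off deterministically from $\Delta_\tau$, so $\mathcal{W}_\tau$ is well-defined as a function on runtime tape sets.

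Second, I would check non-triviality by observing that the set of possible write sequences is not a singleton: two relative models $\mathcal{P}(M_1,M'_1)$ and $\mathcal{P}(M_2,M'_2)$ whose global transition functions differ even in a single step produce different runtime sets $\Delta_\tau$ and hence distinct $\mathcal{W}_\tau$ (different symbols, different tape indices, or different lengths $\|\mathcal{W}_\tau\|$). Partiality follows because not every string in $\{0,1\}^*$ encodes a legal runtime tape set, so the domain is a proper subset of $\{0,1\}^*$. Together these observations verify that $\mathcal{W}_\tau$ fits the definition of a simulation property $R : \{0,1\}^* \nrightarrow \{0,1\}^*$.

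With the reduction in hand, Theorem \ref{thm:SIMPROPERTY_undecidable} immediately yields that WRITEORDER is undecidable, since $M'$ is prohibited from reading or writing on any tape state in $\Delta_\tau$ by the very definition of the local clock $\tau$. The only nontrivial obstacle I anticipate is making the non-triviality argument airtight: one must ensure the non-trivial variation in $\mathcal{W}_\tau$ is witnessed \emph{within} a single relative model rather than only across different models, since the simulation property is defined for a fixed $\mathcal{P}(M,M')$. I would handle this by choosing any local machine whose transition function can exhibit at least two distinct write patterns on different local inputs, which is a mild condition that holds for essentially any non-degenerate $M'$, completing the corollary.
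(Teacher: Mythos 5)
Your proposal is correct and follows essentially the same route as the paper: both arguments reduce WRITEORDER to SIMPROPERTY by observing that recovering the ordered write sequence $\mathcal{W}_\tau$ requires access to the runtime tape set $\Delta_\tau$, which the definition of the local clock forbids, and then invoke Theorem \ref{thm:SIMPROPERTY_undecidable}. Your version is somewhat more explicit than the paper's (which simply asserts that $M'$ would need to read $S'_t$ at global times strictly between $k_\tau$ and $k_{\tau+1}$), since you verify directly that $\mathcal{W}_\tau$ is a partial, non-trivial function of $\Delta_\tau$ in the sense of the definition of a simulation property, matching the pattern of Corollaries \ref{cor:SIMTIME_undecidable} and \ref{cor:SIMSPACE_undecidable}.
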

\begin{proof}
To compute the order of $\mathcal{W}_{\tau}$, the local machine $M'$ needs to perform read operations on $S'_t$ at absolute times between $k_{\tau}$ and $k_{\tau + 1} - 1$, which requires knowledge of the runtime tape set $\Delta_\tau$. Therefore, solving WRITEORDER requires computing a simulation property, which is undecidable by Theorem \ref{thm:SIMPROPERTY_undecidable}.
\end{proof}

One can see by Corollary \ref{thm:writeorder_und} that if the local machine can't perceive the order in which the write operations occur, then from its local frame they all occur simultaneously. This is further discussed in Appendix \ref{app:tau_choice}. This seems to contradict the intuition that time is measured by a change. If the local time was measured by the change in $S'_t$, then the local time of the local machine would change with every write operation. But, it can't perceive the write operations as occurring separately. Therefore, the local frame of the local machine is equivalent to the local time the machine experiences if it were real, namely at the moment the transition function is computed. So, the perception of local time is at every instance of the transition function and only at the transition function.

\section{Local Clock Conventions}
\label{app:tau_choice}
In Section \ref{sec:relative_model}, we choose to increment $\tau$ only when $S_{k_{\tau+1}}'$ and $q_{k_{\tau+1}}'$ are updated according to the transition function of $M'$. In this section, we choose an alternative convention for the local clock as in Appendix \ref{app:write_order} and show that these definitions also cannot compute the specific global time $t$ of an update. In this convention, we define the new local clock $\Tilde{\tau}$ to increment whenever there is a partial or complete update to the local machine's tape state $S'_t$ or machine state $q'_t$. This convention feels more natural in the sense that time should measure all changes. With this alternative clock convention, we still require that the tape state $S'_t$ and machine state $q'_t$ will not change except for when the global machine $M$ writes the updated states. Under the new definition of local time, both computing the time in between local transition function updates and computing the time of write operations are still undecidable.

\begin{problem}[WRITETIME]
Consider a relative model $\mathcal{P}(M,M')$. We denote the set of all global time steps where the global machine $M$ makes any partial or complete update to the local machine's tape state or machine state as $\Tilde{K} = \{\Tilde{k}_1, \Tilde{k}_2, ... \}$. Then, construct a local machine $M'$ to compute $\Tilde{k}_{\Tilde{\tau}+1} - \Tilde{k}_{\Tilde{\tau}}$ for any $\Tilde{k}_{\Tilde{\tau}} \in \Tilde{K}$ without querying the solution from $M$.
\end{problem}
\begin{corollary}
WRITETIME is undecidable. \label{thm:WRITETIME_undecidable}
\end{corollary}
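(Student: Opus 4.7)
The plan is to mirror the argument for Corollary \ref{cor:SIMTIME_undecidable}, but with the finer clock $\tilde{\tau}$ in place of $\tau$. First I would introduce an analogue of the runtime tape set adapted to the write-based clock, namely $\tilde{\Delta}_{\tilde{\tau}} = \{S_{\tilde{k}_{\tilde{\tau}}+1}, \ldots, S_{\tilde{k}_{\tilde{\tau}+1}-1}\}$: the global tape states strictly between two consecutive observable writes. The same counting identity as in SIMTIME then gives
\begin{equation*}
\tilde{k}_{\tilde{\tau}+1} - \tilde{k}_{\tilde{\tau}} = \|\tilde{\Delta}_{\tilde{\tau}}\| + 1,
\end{equation*}
so computing this quantity reduces to reading (and counting) the cardinality of $\tilde{\Delta}_{\tilde{\tau}}$.

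Next I would verify the two clauses of the simulation-property definition for this quantity. Non-triviality holds because different encodings of $M'$, e.g.\ different queried functions whose global cost varies with the input, can make the gap between consecutive writes take different positive integer values. Partiality is automatic: $\tilde{k}_{\tilde{\tau}+1} - \tilde{k}_{\tilde{\tau}}$ is only defined when a next write actually occurs, which is not guaranteed across all encoded local machines (the simulation may halt or diverge beforehand). Hence $\tilde{k}_{\tilde{\tau}+1} - \tilde{k}_{\tilde{\tau}}$ qualifies as a partial, non-trivial function over $\tilde{\Delta}_{\tilde{\tau}}$, i.e., a simulation property, and Theorem \ref{thm:SIMPROPERTY_undecidable} then immediately yields that WRITETIME is undecidable.

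The one subtlety I anticipate is justifying that $M'$ still has no access to the intermediate tape states $\tilde{\Delta}_{\tilde{\tau}}$ under the new clock. Between $\tilde{k}_{\tilde{\tau}}$ and $\tilde{k}_{\tilde{\tau}+1}$, all global activity is either scratch work on $\bar{S}_t$, which is disjoint from $S'_t$, or preparatory steps that by construction cannot alter $S'_t$ or $q'_t$ before the next write. So at the instant $\tilde{\tau}$ advances, $M'$ only perceives the new contents of its own tape, not the global history that produced them, exactly as in the original $\tau$-clock setting.

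I expect the bookkeeping just above to be the main (and essentially only) obstacle: one must carefully verify that the definition of simulation property, originally stated relative to the coarser clock $\tau$ and its update set $K$, transfers verbatim to $\tilde{\tau}$ and $\tilde{K}$ without any change in the inaccessibility argument. Once that transfer is made explicit, the corollary follows by a direct appeal to Theorem \ref{thm:SIMPROPERTY_undecidable}, in complete parallel with the proof of Corollary \ref{cor:SIMTIME_undecidable}.
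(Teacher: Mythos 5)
Your argument is correct in substance but takes a genuinely different route from the paper. The paper proves Corollary \ref{thm:WRITETIME_undecidable} by reduction: it assumes a decider $R$ for WRITETIME and uses it to build a decider for WRITEORDER (by recovering the ordering of $\Tilde{K}$), contradicting Corollary \ref{thm:writeorder_und}. You instead replay the SIMTIME argument of Corollary \ref{cor:SIMTIME_undecidable} directly at the finer time scale: you define a write-level runtime set $\tilde{\Delta}_{\tilde{\tau}}$, observe the counting identity $\tilde{k}_{\tilde{\tau}+1} - \tilde{k}_{\tilde{\tau}} = \|\tilde{\Delta}_{\tilde{\tau}}\| + 1$, check partiality and non-triviality, and invoke Theorem \ref{thm:SIMPROPERTY_undecidable}. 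What your approach buys is self-containment and arguably more rigor: the paper's reduction is terse and leaves unexplained how knowing the inter-write time gaps yields the \emph{ordered set of write operations} $\mathcal{W}_\tau$ (symbols and positions included), which is what WRITEORDER actually asks for; your route sidesteps that entirely. What it costs is that the paper's Definition 2 ties simulation properties specifically to the coarse-clock runtime set $\Delta_\tau$ indexed by $K$, so Theorem \ref{thm:SIMPROPERTY_undecidable} does not literally apply to $\tilde{\Delta}_{\tilde{\tau}}$; you correctly flag this and the inaccessibility argument does transfer (between consecutive writes the global machine only touches $\bar{S}_t$, and $M'$ never reads intermediate global states under either clock convention), but strictly speaking you are proving a mild generalization of Theorem \ref{thm:SIMPROPERTY_undecidable} rather than citing it, and that extension should be stated explicitly rather than left as ``bookkeeping.''
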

\begin{proof}
Assume for the sake of contradiction that WRITETIME is decided by $R$.  We can use $R$ to build a decider for WRITEORDER by trivially returning the ordering of the list $\Tilde{K}$.  Since we are able to build a decider for WRITEORDER, and WRITEORDER is undecidable by Corollary \ref{thm:writeorder_und}, there is a contradiction, and thus WRITETIME is also undecidable.
\end{proof}

So, regardless of which convention is chosen for the local clock, the time in between write operations still can't be computed directly. Therefore, there is no benefit to choosing either convention computationally. However, choosing $\tau$ to increment with the local transition function is more in line with the computational abilities of the local machine.

\section{Comparing Relative Models to Oracle Models}
\label{app:oracles}
As shown in Theorem \ref{thm:seminary_problem}, with respect to the local clock $\tau$, a relative model and an oracle model can be equivalent for computable functions. We differentiate a relative model from an oracle model by noting that oracle models bestow more global computational power to the overall model with respect to the global clock $t$. Whereas relative models do not increase the overall computational power with respect to $t$. Additionally, Theorem \ref{thm:seminary_problem} gives us a method of building a local oracle model. However, we cannot build an oracle model unless the analysis is with respect to the local machine.

\section{Probabilistic SIMTIME}
\label{app:pSIMTIME}
We define a probabilistic transition function $\delta$ that maps the current state $S_{t}$ of the tape to another state $S_{t+1}$ by choosing from a fixed set of possible states $\mathcal{S}_\rho(S_{t})$ according to a probability distribution $\rho$.

\begin{problem}[pSIMTIME]
Consider a relative model $\mathcal{P}(M,M')$ where both $\delta$ and $\delta'$ are probabilistic transition functions under probability distributions $\rho$ and $\rho'$, respectively. We denote the set of all global time steps where $M$ updates $M'$ as $K = \{k_1, k_2, ... \}$. Then, construct a local machine $M'$ that computes $k_{\tau+1} - k_\tau$ for any $k_\tau \in K$ without querying $M$.
\end{problem}
\begin{corollary}
pSIMTIME is undecidable. \label{thm:pSIMTIME_undecidable}
\end{corollary}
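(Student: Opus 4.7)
The plan is to reduce pSIMTIME to the deterministic SIMTIME analysis by observing that, on any particular realization of the randomness, the runtime tape set $\Delta_\tau$ remains the object that encodes the inter-update time, and the local machine still has no access to it. The probabilistic nature of $\delta$ and $\rho$ enters only in which trajectory through $\Delta_\tau$ is realized; it does not alter the fact that computing $k_{\tau+1} - k_\tau$ is a function over the runtime tape set, and so the local machine's inability to read $\Delta_\tau$ is still the obstruction.

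First I would fix a run of the probabilistic relative model by conditioning on the sequence of samples drawn by $\rho$ and $\rho'$, so that $S_{k_\tau+1}, \dots, S_{k_{\tau+1}-1}$, and hence $\Delta_\tau$, is well defined for that run. Then I would apply the identity from Corollary \ref{cor:SIMTIME_undecidable}, namely $k_{\tau+1} - k_\tau = \|\Delta_\tau\| + 1$, which holds per run independently of whether $\delta$ is deterministic or probabilistic. Partiality is inherited from the possibility that $M$ never terminates the simulation of a given local step, and non-triviality follows by exhibiting two realizations whose intermediate trajectories differ in length; for example, by choosing $\rho$ to place nonzero weight on two branches of $\delta$ that produce the same local update after different numbers of global steps. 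Having verified that $k_{\tau+1} - k_\tau$ is partial, non-trivial, and defined over $\Delta_\tau$, it is a simulation property, so the result follows from Theorem \ref{thm:SIMPROPERTY_undecidable}.

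The main obstacle I anticipate is making the non-triviality clause airtight without smuggling in determinism. The definition of a simulation property is stated for partial functions on $\{0,1\}^*$, not for random variables, so I would treat the probabilistic computation as a deterministic function on an extended domain that includes the random seed consumed by $M$ and $M'$. On that extended domain the induced map to $k_{\tau+1} - k_\tau$ is genuinely partial and has non-singleton image, and the undecidability argument in Theorem \ref{thm:SIMPROPERTY_undecidable} transfers verbatim, since the local machine still cannot perform read or write operations on any intermediate tape state regardless of the randomness used by the global machine.
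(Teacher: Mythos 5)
Your proposal is correct at the paper's level of rigor, but it takes a genuinely different route from the paper's own argument. You reduce pSIMTIME to the simulation-property machinery: by conditioning on the realized randomness (equivalently, absorbing the random seed into the domain), the runtime tape set $\Delta_\tau$ is well defined per run, the identity $k_{\tau+1}-k_\tau = \|\Delta_\tau\|+1$ still holds, and the obstruction is exactly the one in Theorem \ref{thm:SIMPROPERTY_undecidable}: the local machine cannot read any tape state in $\Delta_\tau$ regardless of how that trajectory was sampled. The paper instead argues by contradiction on the structure of the probabilistic transition function itself: if $M'$ could compute the interval, then the post-update local state would have to encode it, so distinct interval lengths would force the sets of reachable states $\mathcal{S}_\rho(S_{k_\tau})$ for different interval end times to be disjoint, which in turn would require $\delta'$ to map the same initial state to two different sets of strings, contradicting its definition. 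Your approach is arguably cleaner and more uniform, since it makes explicit that randomness only selects which trajectory through $\Delta_\tau$ is realized and does not change what the local machine can access; it also reuses the already-proved Theorem \ref{thm:SIMPROPERTY_undecidable} rather than introducing a new ad hoc contradiction. The paper's approach, by contrast, localizes the obstruction in the information content of the local state rather than in read access to $\Delta_\tau$, which is a complementary (if less tidy) perspective. One small caution: your non-triviality clause quantifies over a particular choice of $\rho$, whereas undecidability of the problem requires that no single $M'$ works for all admissible models; this is the same level of informality as the paper's own Corollary \ref{cor:SIMTIME_undecidable}, so it is acceptable here, but worth flagging if you were to tighten the argument.
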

\begin{proof}

We will prove this by contradiction. Consider two global clock times $k_\tau$ and $k_{\tau+1}$ and assume that $M'$ can compute $k_{\tau+1} - k_\tau$. Then, let $S'_{k_\tau}$ be the state of the local machine's $M'$ tape at global time $k_\tau$ immediately after applying $\delta'$ at $k_\tau$. At global time step $k_{\tau+1}$, the global machine $M$ has simulated $M'$ by computing the probabilistic transition function $\delta'$ that maps $S_{k_\tau}$ to some string $S_{k_{\tau+1}} \in \mathcal{S}_\rho(S_{k_{\tau}})$. It's important to note that $\delta'$ is a function of $S'_{k_\tau}$ to a string $S'_{k_{\tau+1}} \in \mathcal{S}_q(S'_{k_{\tau}})$ that encodes some information that can compute $k_{\tau+1} - k_\tau$. Now, suppose we considered a different interval end time $\bar{k}_{\tau+1}$. Then, the local machine's transition function $\delta'$ constructs a different state $S'_{\bar{k}_{\tau+1}} \in \mathcal{S}'_q(S_{k_{\tau}})$ that encodes $k'_{\tau+1} - k_\tau$ as opposed to the state $S'_{k_{\tau+1}} \in \mathcal{S}_\rho(S'_{k_{\tau}})$ that encodes $k_{\tau+1} - k_\tau$. If it is ever possible for $S'_{\bar{k}_{\tau+1}} = S'_{k_{\tau+1}}$, or at least the information that encodes the time interval difference is the same, then $M'$ cannot differentiate information that encodes $\bar{k}_{\tau+1} - k_\tau$ from information that encodes $k_{\tau+1} - k_\tau$. Therefore, $M'$ must be defined in such a way that it is impossible to map to the same string for different global time intervals, i.e., $\mathcal{S}_\rho(S_{\bar{k}_{\tau}}) \cap \mathcal{S}_\rho(S_{k_{\tau}}) = \emptyset$. However, by imposing this restriction, $\delta'$ can map to two different sets of strings from the same intial state $S_{k_{\tau}}$, which contradicts its definition.
\end{proof}




\end{appendices}




\end{document}